\newtheorem{definition}{Definition}
\newtheorem{proposition}{Proposition}
\newtheorem{observation}{Observation}
\begin{document}
\title{Distinguishability classes, resource sharing, and bound entanglement distribution}

\author{Saronath Halder}
\thanks{Present affiliation: Harish-Chandra Research Institute, HBNI, Chhatnag Road, Jhunsi, Prayagraj (Allahabad) 211 019, India}
\email{saronath.halder@gmail.com}
\affiliation{Department of Mathematical Sciences, Indian Institute of Science Education and Research Berhampur, Transit Campus, Government ITI, Berhampur - 760010, Odisha, India}

\author{Ritabrata Sengupta}
\email{rb@iiserbpr.ac.in}
\affiliation{Department of Mathematical Sciences, Indian Institute of Science Education and Research Berhampur, Transit Campus, Government ITI, Berhampur - 760010, Odisha, India}

\begin{abstract}
Suppose a set of $m$-partite, $m\geq3$, pure orthogonal fully separable states is given. We consider the task of distinguishing these states perfectly by local operations and classical communication (LOCC) in different $k$-partitions, $1<k<m$. Based on this task, it is possible to classify the sets of product states into different classes. For tripartite systems, a classification of the sets with explicit examples is presented. Few important cases related to the aforesaid task are also studied when the number of parties, $m\geq4$. These cases never appear for a tripartite system. However, to distinguish any LOCC indistinguishable set, entanglement can be used as resource. An important objective of the present study is to learn about the efficient ways of resource sharing among the parties. We also find an interesting application of multipartite product states which are LOCC indistinguishable in a particular $k$-partition. Starting from such product states, we constitute a protocol to distribute bound entanglement between two spatially separated parties by sending a separable qubit. 
\end{abstract}
\maketitle

\section{Introduction}\label{sec1}
Quantum state discrimination problem is a well-known problem within the theory of quantum information. In this problem a quantum system is prepared in a state which is secretly chosen from a known set. The objective is to determine the state of the system. In brief, we say this as the problem of distinguishing a given set. If the states within a set are orthogonal to each other then by performing a measurement on the entire system, it is always possible to distinguish the set. On the other hand, perfect discrimination of nonorthogonal states is not possible \cite{Nielsen00}. Nevertheless, even if the given states are  orthogonal, the problem may arise if the parts of a composite system are distributed among spatially separated parties. This is because in a spatially separated configuration we consider that the parties can carry out measurements only on their own subsystems. But they can communicate with each other classically. This class of operations is known as local operations and classical communication (LOCC) \cite{Chitambar14}. 

In many research works \cite{Peres91, Walgate00, Virmani01, Ghosh01, Horodecki03, Fan04, Ghosh04, Nathanson05, Watrous05, Fan07, Duan07, Bandyopadhyay09, Yu12}, the authors considered the problem of distinguishing a given set when the parties are allowed to perform LOCC only. In brief, we say this as the problem of distinguishing a given set by LOCC. It is also known as the local state discrimination problem. The states we consider here are orthogonal to each other. Naturally, perfect discrimination of the states is studied. If a set is not perfectly distinguishable by LOCC then the set is an LOCC/locally indistinguishable set or simply a nonlocal set. If such a set forms a basis for any Hilbert space then the basis is an LOCC/locally indistinguishable basis or simply a nonlocal basis. The nonlocal sets can be used for practical purposes, for example, quantum secret sharing \cite{Markham08}, data hiding \cite{Terhal01, Eggeling02} etc. Furthermore, to study the nonlocal features of distributed quantum systems, different state discrimination problems by LOCC are crucial. 

If an orthonormal basis contains only entangled states then the basis must be locally indistinguishable, for example, the Bell basis \cite{Ghosh01}. However, there also exist nonlocal orthonormal bases which contain only product states. Clearly, such bases exhibit {\it quantum nonlocality without entanglement} as introduced for the very first time in Ref.~\cite{Bennett99-1}. Operationally, these bases correspond to separable measurements which cannot be accomplished by LOCC. Other than orthonormal product bases, there also exist small sets of pure orthogonal product states which are nonlocal \cite{Bennett99, DiVincenzo03}. Subsequently, many articles were written to study both bipartite and multipartite pure orthogonal product states \cite{Groisman01, Walgate02, Rinaldis04, Niset06, Ye07, Feng09, Duan10, Yang13, Childs13, Zhang14, Yu15, Zhang15, Wang15, Chen15, Yang15, Zhang16, Xu16, Zhang16-1, Xu16-1, Wang17, Zhang17, Xu17, Wang16, Zhang17-1, Zhang17-2, Zhang17-3, Corke17, Halder18, Halder18-1}. Nonetheless, the properties of multipartite product states across every bipartition have been studied very recently \cite{Halder19, Rout19}. In both the articles, sets of tripartite pure orthogonal product states have been constructed. Interestingly, these sets are locally indistinguishable across all bipartitions. Thus, for the completeness it is necessary to study the $m$-partite ($m\geq3$) product states which are locally distinguishable in some (or in all) $k$-partitions ($1<k<m$). Remember that for a multipartite system if a particular partition of the parties is mentioned then two or more parties can come together in a single location. Thereafter, the parties of a single location, can perform quantum operations together under the setting of LOCC.  

To distinguish a given nonlocal set, entanglement can be used as resource. There are several articles where entanglement in pure form was used to distinguish nonlocal sets \cite{Cohen07, Cohen08, Bandyopadhyay09-1, Bandyopadhyay10, Duan14, Bandyopadhyay14, Bandyopadhyay16, Zhang16-2, Bandyopadhyay18, Zhang18, Halder18, Li19}. In Ref.~\cite{Cohen08}, the author presented protocols to distinguish product states using entanglement as resource. After that article, such protocols for both bipartite \cite{Zhang16-2, Li19} and multipartite \cite{Zhang18, Halder18, Li19} systems were constructed. These protocols are very efficient as they consume less entanglement with respect to many other schemes. Note that to make a protocol entanglement consumption wise efficient, it is required to share resource state(s) in a suitable way among the parties. In particular, if a given set is locally indistinguishable in some (not all) $k$-partitions then there might be constraint on the ways of resource sharing. Thus, to reduce the entanglement consumption we have to learn about the efficient ways of resource sharing among the parties. This clearly indicates the importance of studying local (in)distinguishability of a given set across every $k$-partition.

Other than entanglement assisted state discrimination, there are many instances where entanglement is used as resource. For example, consider teleportation \cite{Bennett93}, superdense coding \cite{Bennett92}, cryptography \cite{Ekert91}, etc. Therefore, distribution of entanglement among spatially separated parties is important to use it as resource. Interestingly, it is possible to distribute entanglement between two spatially separated parties by sending a separable qubit. This was first introduced in Ref.~\cite{Cubitt03}. Subsequently, many related articles (both theoretical and experimental) were written \cite{Mista08, Mista09, Mista13, Kay12, Fedrizzi13, Vollmer13, Peuntinger13, Streltsov12, Chuan12, Streltsov14, Streltsov15, Zuppardo16}. To the best of our knowledge, in all the cases the distributed entangled state is negative under partial transpose in the desired bipartition. So, we ask the question how to distribute entangled state which is positive under partial transpose \cite{Horodecki97} by sending a separable qubit. Clearly, this implies the distribution of bound entanglement \cite{Horodecki98}. We mention here that the use of specific bound entangled states as resource was shown in many contexts. For example, consider the instances of secure key distillation \cite{Horodecki05, Horodecki08, Horodecki09}, quantum metrology \cite{Toth18}, quantum steering \cite{Moroder14}, quantum nonlocality \cite{Vertesi14, Yu17}, etc.

In this work we consider LOCC (in)distinguishability of a set of multipartite orthogonal product states in different $k$-partitions. To understand the present results, we give basic assumptions and the notations in Sec.~\ref{sec2}. Next, in Sec.~\ref{sec3}, different distinguishability classes of tripartite sets of product states are presented. This classification is based on the LOCC (in)distinguishability of the sets in different bipartitions. Along with the classification explicit examples are also given. After that few multipartite (number of parties, $m\geq4$) cases are also discussed in Sec.~\ref{sec4}. These cases do not appear for a tripartite system. It is also explained how LOCC (in)distinguishability of a set in different $k$-partitions is connected with the efficient ways of resource sharing. The resource states are shared among the parties in order to distinguish a given nonlocal set perfectly. As a consequence of the present study, we show how to distribute bound entanglement between two spatially separated parties by sending a separable qubit. The corresponding protocol is constituted in Sec.~\ref{sec5}, starting from a particular type of tripartite set of product states. Such a protocol is an indirect one to establish entanglement \cite{Zuppardo16}. Now, for an indirect protocol it is necessary to consider at least three particles as introduced in Ref.~\cite{Cubitt03} (see Ref.~\cite{Zuppardo16} for greater details). The present protocol also consists of only three particles. However, after the description of the protocol finally, in Sec.~\ref{sec6}, the conclusion is drawn.

\section{Assumptions and Notations}\label{sec2}
We consider here that the given states are pure multipartite (number of parties, $m\geq3$) fully separable states. Such a state has the form $\ket{\alpha_1}\otimes\ket{\alpha_2}\otimes\dots\otimes\ket{\alpha_m}\equiv\ket{\alpha_1}\ket{\alpha_2}\dots\ket{\alpha_m}$. These states are pairwise orthogonal to each other and thus, the perfect discrimination of them is considered. If the state $\ket{\alpha_1}\ket{\alpha_2}\dots\ket{\alpha_m}$ belongs to $\mathcal{H}=\mathcal{H}_1\otimes\mathcal{H}_2\otimes\dots\otimes\mathcal{H}_m$, then $\ket{\alpha_1}\in\mathcal{H}_1$, $\ket{\alpha_2}\in\mathcal{H}_2$ etc. Throughout the manuscript we use the notation $\ket{\alpha_1\pm \alpha_2 \pm\dots}\equiv\frac{1}{N}(\ket{\alpha_1}\pm\ket{\alpha_2}\pm\dots)$, $N$ is the normalization constant. For simplicity, we ignore the normalization constants in many places where these constants do not play any important role. Note that a given set may or may not be a complete basis in a Hilbert space and the states within a given set are equally probable. Unless, it is described clearly, consider that all the parties are spatially separated. Thus, the parties are restricted to perform measurements on their respective subsystems only. If there are many parties in a single location then those parties can perform measurements together under the setting of LOCC. For a tripartite system, there are only bipartitions. So, when a bipartition is considered, two parties come together in a single location. But if the number of parties, $m\geq4$ then there are $k$-partitions ($1<k<m$). Therefore, the number of parties which can come together, may depend on the partition. We also mention that in case of entanglement-assisted discrimination of a given nonlocal set, we consider only pure entangled states as resource. These entangled states can be bipartite or multipartite depending on the situation. We now proceed to analyze different examples of tripartite systems.  

\section{Tripartite Systems}\label{sec3}
Consider three parties A(lice), B(ob), and C(harlie). A tripartite quantum system is associated with a Hilbert space, $\mathcal{H} = \mathcal{H}_A\otimes\mathcal{H}_B\otimes\mathcal{H}_C = \mathbb{C}^{d_A}\otimes\mathbb{C}^{d_B}\otimes\mathbb{C}^{d_C}$; $d_A$, $d_B$, and $d_C$ are the dimensions of the subsystems. These subsystems are possessed by Alice, Bob, and Charlie respectively. In this section we use the notation $i|jk$ which stands for a bipartition: $i$ versus $jk$ ($j$ and $k$ are in a single location). For tripartite systems, there are three bipartitions - $A|BC$, $B|CA$, and $C|AB$. Based on these bipartitions, we classify the sets of tripartite orthogonal product states into different categories. We say these as distinguishability classes, given as the following. 

{\bf Classification.}~(i) Locally indistinguishable across every bipartition, that is, such a set is a genuinely nonlocal set, (ii) locally indistinguishable across only two bipartitions, (iii) locally indistinguishable across only one bipartition, (iv) distinguishable across all of the bipartitions. Examples of tripartite genuinely nonlocal sets of product states can be found in Ref.~\cite{Halder19, Rout19}. In particular, in Ref.~\cite{Rout19}, a classification of genuinely nonlocal product bases and their entanglement assisted discrimination were discussed. However, in this work we focus on those sets which are locally distinguishable in at least one bipartition. If a set is locally distinguishable in at least one bipartition then that set is a {\it bi-distinguishable} set. Moreover, a {\it fully bi-distinguishable} set is distinguishable across every bipartition but it is locally indistinguishable when all the parties are spatially separated. Again, a fully distinguishable (or simply distinguishable) set is locally distinguishable when all the parties are spatially separated. A computational basis in any given Hilbert space is an example of a fully distinguishable set of product states. 

Now, before we present the tripartite results, it is important to mention here about certain features of the product bases presented in Ref.~\cite{Bennett99-1}. Two separate product bases are constructed in that paper. One is in $\mathbb{C}^3\otimes\mathbb{C}^3$ and the other is in $\mathbb{C}^2\otimes\mathbb{C}^2\otimes\mathbb{C}^2$. The two-qutrit product basis consists of nine states: \{$\ket{0}\ket{0\pm1}$, $\ket{0\pm1}\ket{2}$, $\ket{2}\ket{1\pm2}$, $\ket{1\pm2}\ket{0}$, $\ket{1}\ket{1}$\}. These states cannot be perfectly distinguished by LOCC. However, to exhibit local indistinguishability, first eight states among the nine are sufficient. The three-qubit product basis consists of eight states: \{$\ket{0}\ket{1}\ket{0\pm1}$, $\ket{1}\ket{0\pm1}\ket{0}$, $\ket{0\pm1}\ket{0}\ket{1}$, $\ket{0}\ket{0}\ket{0}$, $\ket{1}\ket{1}\ket{1}$\}. This set is also locally indistinguishable when the three qubits are distributed among three spatially separated parties. We note that $\mathbb{C}^3\otimes\mathbb{C}^3$ is the minimum dimensional bipartite Hilbert space configuration for such product states to exist. This is because of the following proposition,  discussed previously in Refs.~\cite{Bennett99, DiVincenzo03}.
\begin{proposition}\label{prop1}
Any set of orthogonal product states in $\mathbb{C}^2\otimes\mathbb{C}^d$ can be perfectly distinguished by LOCC and therefore, the set can be extended to a complete basis.
\end{proposition}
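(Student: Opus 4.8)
The plan is to exhibit an explicit LOCC protocol, proving distinguishability directly, and then to read off extendibility from the block structure the protocol reveals. The engine is a structural fact about the qubit factor: orthogonality of $\ket{\alpha_i}\ket{\beta_i}$ and $\ket{\alpha_j}\ket{\beta_j}$ gives $\braket{\alpha_i|\alpha_j}\braket{\beta_i|\beta_j}=0$, so non-orthogonal Alice parts force orthogonal Bob parts; and because $\ket{\alpha_i}\in\mathbb{C}^2$, every Alice direction has a unique orthogonal partner. First I would group the states into \emph{axes}: call two states equivalent when their Alice parts are either equal or orthogonal, so that each class is labelled by an orthonormal pair $\{\ket{u},\ket{u^{\perp}}\}$ of $\mathbb{C}^2$. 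Since the orthogonal partner is unique in a qubit, this is a genuine equivalence relation.

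The two degenerate configurations are immediate. If every Alice part is parallel, the structural fact makes all $\ket{\beta_i}$ mutually orthogonal and Bob distinguishes the set single-handedly. If all states lie on one axis $\{\ket{u},\ket{u^{\perp}}\}$, Alice measures $\{\ketbra{u}{u},\ketbra{u^{\perp}}{u^{\perp}}\}$, which leaves every (product) state undisturbed on Bob's side; within each outcome the survivors share a common Alice part and hence have mutually orthogonal Bob parts, so Bob completes the identification.

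The general case reduces to this single-axis case through a preliminary measurement by Bob. For two states on different axes the Alice parts are neither parallel (different axes) nor orthogonal (the orthogonal partner of an Alice direction lies in the same axis), so $\braket{\alpha_i|\alpha_j}\neq0$ and the structural fact forces $\braket{\beta_i|\beta_j}=0$. Consequently the subspaces $V_s=\mathrm{span}\{\ket{\beta_i}: i\text{ on axis }s\}$ are mutually orthogonal and give an orthogonal decomposition $\mathbb{C}^d=\bigoplus_s V_s\oplus V_{\perp}$. Bob measures the block projectors $\{P_s\}$; a state on axis $s$ has $\ket{\beta_i}\in V_s$, so the outcome reveals its axis without disturbing it, returning us to the single-axis situation already solved. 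The step I would be most careful about is exactly this bookkeeping: verifying that the axis relation is well-defined on a qubit and that the blocks $V_s$ are genuinely pairwise orthogonal, since this is what makes Bob's first measurement orthogonal (hence LOCC-implementable) and non-disturbing.

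For the concluding clause, extendibility follows from the same picture. The protocol sorts each state into a branch labelled by a Bob block, an Alice basis vector, and finally a Bob basis vector; completing the Alice basis on $\mathbb{C}^2$ and each Bob basis on the relevant subspace to full orthonormal bases yields a complete orthogonal \emph{product} basis of $\mathbb{C}^2\otimes\mathbb{C}^d$ containing the given states. Equivalently, one may argue directly that a nonzero orthogonal complement always contains a product vector, using that $\ket{0}\ket{v_0}+\ket{1}\ket{v_1}$ is a product state iff $\ket{v_0}\parallel\ket{v_1}$, and then append such vectors one at a time until the basis is complete.
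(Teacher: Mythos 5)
Your proof is correct. Note that the paper itself does not prove Proposition~1 at all --- it defers to the two references it cites (Bennett \emph{et al.} and DiVincenzo \emph{et al.}) --- so there is no in-paper argument to compare against; what you have written is essentially a self-contained reconstruction of the standard argument from those sources. The key steps all check out: the parallel-or-orthogonal relation on Alice's qubit rays is a genuine equivalence relation (uniqueness of the orthogonal ray in $\mathbb{C}^2$ gives transitivity, which is exactly what fails when Alice's dimension is $\geq 3$); cross-axis pairs have non-orthogonal Alice parts and hence, by the product-orthogonality identity, orthogonal Bob parts, so the blocks $V_s$ are pairwise orthogonal and Bob's opening projective measurement is orthogonality-preserving and non-disturbing on every state in the set; and the resulting branch structure (Bob block, Alice axis vector, Bob vector) completes to a full orthogonal product basis, which gives the extendibility clause. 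The one soft spot is your closing ``equivalently'' remark: the assertion that a nonzero orthogonal complement of an orthogonal product set always contains a product vector is not self-evident --- in higher dimensions it is precisely what unextendible product bases violate --- so in $\mathbb{C}^2\otimes\mathbb{C}^d$ it genuinely requires the block decomposition you already established rather than standing as an independent shortcut. Since your primary completion argument does not rely on that remark, the proof stands as written.
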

Due to the above proposition, any three-qubit pure orthogonal product states can be perfectly distinguished by LOCC across every bipartition. Therefore, the three-qubit product basis constructed in Ref.~\cite{Bennett99-1} is locally distinguishable across every bipartition. But the set can show local indistinguishability when all the parties are spatially separated (as mentioned earlier). This constitutes an example of a fully bi-distinguishable set. Nevertheless, in higher dimensions (when the dimension of each subsystem, $d\geq3$) Proposition \ref{prop1} does not work. So, it is important to find fully bi-distinguishable sets in higher dimensions. We identify that a particular type of tripartite sets in $(\mathbb{C}^d)^{\otimes3}$, $d\geq3$, given in Ref.~\cite{Halder18} are fully bi-distinguishable. The states of the sets are 
\begin{equation}\label{eq1}
\begin{array}{c}
|\psi_{1i}\rangle=|0\rangle|i\rangle|0+i\rangle,~
|\psi_{1i}^\perp\rangle=|0\rangle|i\rangle|0-i\rangle, \\ 
|\psi_{2i}\rangle=|i\rangle|0+i\rangle|0\rangle,~
|\psi_{2i}^\perp\rangle=|i\rangle|0-i\rangle|0\rangle, \\
|\psi_{3i}\rangle=|0+i\rangle|0\rangle|i\rangle,~
|\psi_{3i}^\perp\rangle=|0-i\rangle|0\rangle|i\rangle, 
\end{array}
\end{equation}
where $i=1,2,\dots,(d-1)$. The proof of local indistinguishability of the above sets when all the parties are spatially separated can be found in Ref.~\cite{Halder18}. Here we prove that such a set is locally distinguishable across every bipartition by constructing an explicit protocol. 

{\bf Protocol 1.}~Consider any two parties together: $AB$, $BC$, or $CA$ (in the cyclic order). These two parties can perform a measurement, described by the projectors: $\mathcal{P}_i$ =$\ket{0i}\bra{0i}$, $\mathcal{P}_f$ = $\mathcal{I}-\sum_i\mathcal{P}_i$; $i=1,2,\dots,(d-1)$. Corresponding to each outcome ``$i$'', there are two orthogonal pure states left, which can be perfectly distinguished by LOCC \cite{Walgate00}. For the outcome ``$f$'', the other party (who stands alone) performs a measurement in computational basis. Here also corresponding to each outcome, there are two orthogonal pure states left, which can be perfectly distinguished by LOCC \cite{Walgate00}. Thus, the protocol completes and it is proved that the above sets are the examples of fully bi-distinguishable sets. 

In Ref.~\cite{Halder18}, it was already shown that a two-qubit maximally entangled state is sufficient to distinguish a set of Eq.~(\ref{eq1}). Interestingly, the resource state can be shared between any two parties. This is because of the following facts: (a) For a fixed $d$, the given set and the resource state together generate a new set in higher dimension. (b) The resource state can be shared between any two parties. But each time the newly generated set captures the same structure when all three parties are spatially separated. Thus, the local discrimination of a newly generated set implies that the resource state can be shared between any two parties. However, if a given tripartite set is not fully bi-distinguishable, then which pair of parties share the resource may play a vital role. Subsequently, we show that a bipartite entangled state can be sufficient to distinguish a set but it must not be shared between particular pair(s) of parties. Now for any fully bi-distinguishable tripartite set the following proposition is presented. 
\begin{proposition}\label{prop2}
If a fully bi-distinguishable set in $\mathbb{C}^{d_A}\otimes\mathbb{C}^{d_B}\otimes\mathbb{C}^{d_C}$, $d_A\geq d_B\geq d_C$, is given then a maximally entangled bipartite state in $\mathbb{C}^{d}\otimes\mathbb{C}^{d}$, $d=d_B$, shared between any two parties is sufficient to distinguish the set locally.
\end{proposition}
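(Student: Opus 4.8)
The plan is to reduce the entanglement-assisted discrimination to a purely LOCC task in one of the three bipartitions by using the shared maximally entangled state to physically merge two of the parties through teleportation. The central observation is that a maximally entangled state in $\mathbb{C}^{d'}\otimes\mathbb{C}^{d'}$ can be used to teleport an arbitrary $d'$-dimensional quantum state from one party to the other, and that teleportation is itself an LOCC operation (a local Bell-type measurement followed by classical communication and a local unitary correction). Hence, once the teleportation is performed, one party physically holds the subsystems of two of the original parties and can implement any joint measurement on them locally. Since a fully bi-distinguishable set is, by definition, perfectly distinguishable by LOCC in each of the bipartitions $A|BC$, $B|CA$, and $C|AB$, it then suffices to merge the two parties that share the resource and run the corresponding bipartite LOCC protocol, with the third (isolated) party playing the other role.

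First I would fix the pair of parties that shares the resource and treat the three possibilities in turn. The key quantitative input is the ordering $d_A\geq d_B\geq d_C$, which controls which subsystem can be teleported with the available resource of dimension $d=d_B$: teleporting the $B$-subsystem requires a $\mathbb{C}^{d_B}\otimes\mathbb{C}^{d_B}$ maximally entangled state, while teleporting the $C$-subsystem requires only $\mathbb{C}^{d_C}\otimes\mathbb{C}^{d_C}$ with $d_C\leq d_B$, so in both cases the resource in $\mathbb{C}^d\otimes\mathbb{C}^d$ is sufficient. When $A$ and $B$ share the state, I would teleport Bob's subsystem to Alice (consuming the full $d=d_B$ of entanglement), so that Alice holds both $A$ and $B$; the remaining task is LOCC in the bipartition $C|AB$, where the set is distinguishable. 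When $B$ and $C$ share the state, I would teleport Charlie's subsystem to Bob (using $d_C\leq d$), reducing to the bipartition $A|BC$. When $C$ and $A$ share the state, I would teleport Charlie's subsystem to Alice (again using $d_C\leq d$), reducing to the bipartition $B|CA$.

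The step I expect to require the most care is the direction of teleportation in the two cases that involve Alice. Since $d_A\geq d_B=d$, the resource is in general too small to teleport Alice's own subsystem, so one cannot naively ``send $A$ into the other location.'' The resolution is precisely to always teleport the subsystem of smaller or equal dimension toward Alice, i.e.\ to send $B$ or $C$ into Alice's location rather than the reverse, and then invoke distinguishability in the bipartition that isolates the remaining party. This is legitimate only because full bi-distinguishability guarantees LOCC-distinguishability in \emph{every} bipartition, so whichever merging the dimension constraint forces upon us, the resulting bipartite problem is still solvable. Finally, since a single teleportation consumes exactly one copy of the maximally entangled state and no further entanglement is needed to complete the bipartite LOCC protocol, one copy of the $\mathbb{C}^d\otimes\mathbb{C}^d$ resource indeed suffices, which completes the argument.
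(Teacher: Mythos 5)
Your proposal is correct and follows essentially the same route as the paper's own proof: use the maximally entangled state to teleport one subsystem, thereby merging two parties into a single location, and then invoke LOCC distinguishability in the resulting bipartition, which full bi-distinguishability guarantees. Your treatment is in fact slightly more explicit than the paper's about why the dimension $d=d_B$ suffices in each of the three cases (always teleporting the $B$ or $C$ subsystem, never $A$), but the underlying argument is the same.
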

\begin{proof} 
Consider that a fully bi-distinguishable set in $\mathbb{C}^{d_A}\otimes\mathbb{C}^{d_B}\otimes\mathbb{C}^{d_C}$, $d_A\geq d_B\geq d_C$, is given. According to the definition, such a set is locally distinguishable across all the bipartitions. Also consider that a maximally entangled bipartite state $\ket{\Psi}$ in $\mathbb{C}^{d}\otimes\mathbb{C}^{d}$, $d=d_B$ is available as resource to distinguish the given set. If $\ket{\Psi}$ is shared between Alice and Bob then using the resource state Bob can teleport his subsystem to Alice. This is equivalent to the situation where Alice and Bob are together in a single location. In this way, they constitute the bipartition $AB|C$. Similarly, if $\ket{\Psi}$ is shared between Alice and Charlie then using it Charlie can teleport his subsystem constituting the bipartition $AC|B$. Finally, if $\ket{\Psi}$ is shared between Bob and Charlie then any of them can teleport his subsystem to the location of the other party constituting the bipartition $A|BC$. After constituting any of the bipartitions the set can be perfectly distinguished by LOCC.
\end{proof}
We now present the examples of tripartite bi-distinguishable sets. It is easy to construct a tripartite set which is locally distinguishable in only one bipartition. By Proposition \ref{prop1}, one can infer that to construct such a set, at least two of the subsystems must have dimensions $\geq3$. But the dimension of the other subsystem can be $2$. Now, consider a basis $\{\ket{\psi_i}_{AB}\ket{j}_{C}\}$ in $\mathbb{C}^3\otimes\mathbb{C}^3\otimes\mathbb{C}^2$; $j=0,1$. Here, $\ket{\psi_i}_{AB}$ are the bipartite locally indistinguishable product states in $\mathbb{C}^3\otimes\mathbb{C}^3$, constructed in Ref.~\cite{Bennett99-1}, $\forall i =0,1,\dots,8$. The tripartite set $\{\ket{\psi_i}_{AB}\ket{j}_{C}\}$ is locally distinguishable in only one bipartition, that is, $AB|C$ bipartition. Interestingly, to distinguish this set, a bipartite entangled state can be sufficient. But this state must not be shared between Alice and Charlie or between Bob and Charlie. Notice that the local indistinguishability of this set is solely because of the states $\ket{\psi_i}_{AB}$, placed between Alice and Bob. Therefore, the resource state must be shared between Alice and Bob. From the construction, it is clear that this technique is not applicable to produce a tripartite set which is locally indistinguishable in a particular bipartition. So, the structure of this set must be nontrivial. For such a set to exist, dimension of one of the subsystems must be $\geq3$, while the dimensions of the other two subsystems can be $2$. We now construct a tripartite set of product states in $\mathbb{C}^3\otimes\mathbb{C}^2\otimes\mathbb{C}^2$ which is locally indistinguishable in a particular bipartition. The states of the set are
\begin{equation}\label{eq2}
\begin{array}{ll}
|\psi_{1}\rangle = |0-1\rangle|0\rangle|0\rangle, &
|\psi_{2}\rangle = |0+1\rangle|0\rangle|0\rangle, \\

|\psi_{3}\rangle = |2-0\rangle|1\rangle|0\rangle, &
|\psi_{4}\rangle = |2+0\rangle|1\rangle|0\rangle, \\

|\psi_{5}\rangle = |1\rangle|1\rangle|0-1\rangle, &
|\psi_{6}\rangle = |1\rangle|1\rangle|0+1\rangle, \\

|\psi_{7}\rangle = |1-2\rangle|0\rangle|1\rangle, &
|\psi_{8}\rangle = |1+2\rangle|0\rangle|1\rangle, \\

|\psi_{9}\rangle = |0\rangle|0-1\rangle|1\rangle, &
|\psi_{10}\rangle = |0\rangle|0+1\rangle|1\rangle.
\end{array}
\end{equation}
Notice that the dimension for which the above set is constructed is the minimum one for such a set to exist. By Proposition \ref{prop1}, the above set is locally distinguishable in $B|CA$ and in $C|AB$ bipartitions. Here we prove that the above set is locally indistinguishable in $A|BC$ bipartition. For that purpose, we consider the above set in $\mathbb{C}^3\otimes\mathbb{C}^4$ and relabel the $4$-dimensional vectors as $\ket{10}\rightarrow\ket{0}$, $\ket{11}\rightarrow\ket{1}$, $\ket{01}\rightarrow\ket{2}$, $\ket{00}\rightarrow\ket{3}$. After relabeling one can easily check that the states $\{\ket{\psi_i}\}_{i=3}^{10}$ turn into the states: $\{\ket{2\pm0}\ket{0}, \ket{1}\ket{0\pm1}, \ket{1\pm2}\ket{2}, \ket{0}\ket{1\pm2}\}$. These states are similar as the locally indistinguishable states in $\mathbb{C}^3\otimes\mathbb{C}^3$, constructed by Bennett {\it et al.} in Ref.~\cite{Bennett99-1}. The locally indistinguishable states of Bennett {\it et al.} are \{$\ket{0}\ket{0\pm1}$, $\ket{0\pm1}\ket{2}$, $\ket{2}\ket{1\pm2}$, $\ket{1\pm2}\ket{0}$\}. These eight states form a subset, $\mathcal{S}$ of the original two-qutrit basis as given earlier. Local indistinguishability of this subset was also explained in the same article. Now, if a subset of a set is locally indistinguishable then the original set must be. In this way, the above set of Eq.~(\ref{eq2}) is indistinguishable in $A|BC$ bipartition. Notice that a two-qubit maximally entangled state is sufficient to distinguish the above set. Obviously, the resource state must not be shared between Bob and Charlie. The resource state can be shared either between Alice and Bob or between Alice and Charlie. The classification of tripartite sets is also explained in Fig.~\ref{fig1}. Note that the diagram does not represent anything other than the classification. Recall that for a tripartite system there are only bipartitions. When the number of parties, $m\geq4$, then there are $k$-partitions. In particular when the number of parties, $m\geq4$, we can observe certain features which may not appear in a tripartite system. For this reason, in the following section we discuss few multipartite cases when the number of parties strictly larger than $3$.

\begin{figure}
\includegraphics[width=0.51\textwidth, height=0.31\textheight]{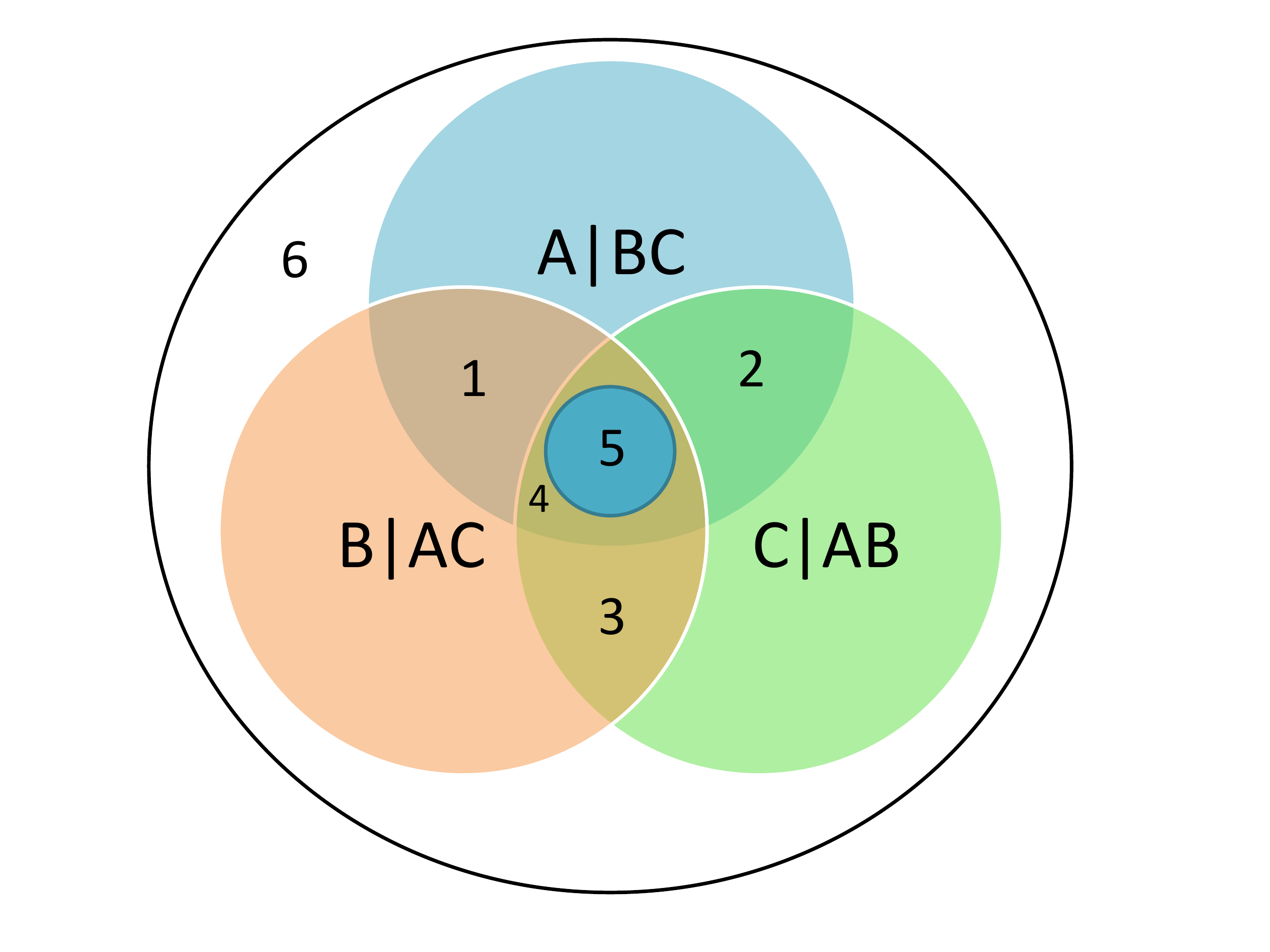}
\caption{(Color online) Classification of tripartite sets of product states: The largest circle contains all the sets. The medium sized circles contain sets that are locally distinguishable in at least one bipartition. The regions $1$, $2$, and $3$ contain sets which are locally distinguishable in only two bipartitions. The region $4$ contains sets which are locally distinguishable in all bipartitions. Fully distinguishable sets are contained in the region $5$ and the region $6$ contains the genuinely nonlocal sets.}\label{fig1}
\end{figure}

\section{Multipartite Systems}\label{sec4}
For an $m$-partite system, we label the parties as $A_1, A_2,\dots, A_m$. The corresponding Hilbert space is $\mathcal{H} = \mathcal{H}_{1}\otimes\mathcal{H}_{2}\otimes\dots\otimes\mathcal{H}_{m} = \mathbb{C}^{d_{1}}\otimes\mathbb{C}^{d_{2}}\otimes\dots\otimes\mathbb{C}^{d_{m}}$; $d_{1}$, $d_{2}$, $\dots$, $d_{m}$ are the dimensions of the subsystems. These subsystems are possessed by $A_1$, $A_2$, $\dots$, $A_m$ respectively. To realize a $k$-partition, we use the notation $A_{a_1}A_{a_2}\dots|A_{b_1}A_{b_2}\dots|A_{c_1}A_{c_2}\dots|\dots$, $a_i,b_i,c_i$ are the labels of $m$ parties. So, ``$|$'' stands for a partition and for a $k$-partition, there are $(k-1)$ ``$|$''s. We now start with the generalization of the sets  of Eq.~(\ref{eq1}) when $\mathcal{H} = (\mathbb{C}^{d})^{\otimes m}$, $m\geq4$. The states of those generalized sets are
\begin{equation}\label{eq3}
\begin{array}{c}
|\psi_{1i}\rangle=|0\rangle|0\rangle\cdots|0\rangle|i\rangle|0+i\rangle, \\
|\psi_{1i}^\perp\rangle=|0\rangle|0\rangle\cdots|0\rangle|i\rangle|0-i\rangle, \\[1ex]
|\psi_{2i}\rangle=|0\rangle\cdots|0\rangle|i\rangle|0+i\rangle|0\rangle, \\
|\psi_{2i}^\perp\rangle=|0\rangle\cdots|0\rangle|i\rangle|0-i\rangle|0\rangle, \\[1ex]
\vdots \\[1ex]
|\psi_{mi}\rangle=|0+i\rangle|0\rangle|0\rangle\cdots|0\rangle|i\rangle, \\
|\psi_{mi}^\perp\rangle=|0-i\rangle|0\rangle|0\rangle\cdots|0\rangle|i\rangle, 
\end{array}
\end{equation}
where $i$ = $1,2,\dots,(d-1)$. We mention here that this generalization for $m$-qubit case is originally given in Ref.~\cite{Xu16}, while the $m$-qudit case is given in Ref.~\cite{Halder18}. We first show that though the tripartite sets are fully bi-distinguishable but when the number of parties increases, the set may not be fully bi-distinguishable. Remember that because of the construction, if any two adjacent parties come together in single location then the sets become distinguishable. This can be easily shown following the same technique as given in Protocol 1 for any number of parties and for any dimension of the subsystems. Therefore, the question arises what happens when the parties coming together are not adjacent. If the number of parties, $m=4$, then there are $A_1, A_2, A_3, A_4$. The adjacent pairs of parties are $A_1A_2$, $A_2A_3$, $A_3A_4$, and $A_4A_1$ (considering cyclic order). The bipartitions are given by $A_1|A_2A_3A_4$, $A_2|A_3A_4A_1$, $A_3|A_4A_1A_2$, $A_4|A_1A_2A_3$, $A_1A_2|A_3A_4$, $A_4A_1|A_2A_3$, and $A_1A_3|A_2A_4$. Notice that except the bipartition $A_1A_3|A_2A_4$, in case of all other bipartitions, the adjacent parties come together. We now write the set of Eq.~(\ref{eq3}) for $m=4$ in $A_1A_3|A_2A_4$ bipartition and it is given by 
\begin{equation}\label{eq4}
\begin{array}{c}
\ket{0i}\ket{00\pm0i},~
\ket{00\pm0i}\ket{i0},\\
\ket{i0}\ket{00\pm i0},~
\ket{00\pm i0}\ket{0i}.\\
\end{array}
\end{equation}
As defined before $i=1,2,\dots,(d-1)$ and the Hilbert space is given by $(\mathbb{C}^d)^{\otimes4}$. For a fixed $i$, we relabel the states as: $\ket{0i}\rightarrow\ket{0}$, $\ket{00}\rightarrow\ket{1}$, and $\ket{i0}\rightarrow\ket{2}$. After relabeling, the above set turns into the eight indistinguishable states of $\mathcal{S}$ by Bennett {\it et al.} $\mathcal{S}$ is already mentioned in the previous section. Thus, the above set is indistinguishable only in $A_1A_3|A_2A_4$ bipartition. As a consequence, if only $A_1$ and $A_3$ come together and the other two parties stand alone then also the above set shows local indistinguishability. In this way, the above set is indistinguishable in certain tripartitions, for example, $A_1A_3|A_2|A_4$, $A_1|A_2A_4|A_3$. In general, if one considers $m$-partite ($m$ is even) generalization of the set of Eq.~(\ref{eq1}) then it is indistinguishable in $A_1A_3\dots A_{m-1}|A_2A_4\dots A_m$ bipartition. This follows from the fact that for a fixed $i$, a suitable relabeling shows that the set turns into a locally indistinguishable set, constructed in Ref.~\cite{Zhang15}. This bipartition also ensures that there are $k$-partitions ($2<k<m$) in which the set shows local indistinguishability. From this discussion we conclude that if the adjacent parties come together then only it helps in the distinguishability of the sets of Eq.~(\ref{eq3}) by LOCC. We now consider the same set for $m=5$. It can be easily shown that this set is distinguishable across every bipartition. Because in case of every bipartition at least a pair of adjacent parties come together. Thus, for $m=5$, the set of Eq.~(\ref{eq3}) forms a fully bi-distinguishable set. But there are tripartitions in which this set can show local indistinguishability, for example, $A_1A_3|A_2A_4|A_5$ tripartition. Because in this tripartition no adjacent parties come together. This leads us to the following observation.
\begin{observation}\label{obs1}
A fully bi-distinguishable set can show local indistinguishability across a $k$-partition, where $2<k<m$.
\end{observation}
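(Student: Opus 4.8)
Since the statement is an existence claim, the plan is to prove it by exhibiting a single explicit witness and verifying the two required properties for it. The natural candidate is the $m=5$ instance of the family in Eq.~(\ref{eq3}) (for definiteness take $d\geq3$). The first property, full bi-distinguishability, I would inherit from the discussion preceding the observation: in every bipartition of five cyclically arranged parties at least one adjacent pair is forced into a common location, and once an adjacent pair is grouped the set is distinguishable by the technique of Protocol 1. Hence the $m=5$ set is distinguishable across all bipartitions while remaining locally indistinguishable when all five parties are separated, i.e.\ it is fully bi-distinguishable.

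For the second property I would take the $3$-partition $A_1A_3|A_2A_4|A_5$, for which $k=3$ and $2<3<5$. The structural reason to expect indistinguishability here is that this partition is a proper $3$-colouring of the $5$-cycle: no two cyclically adjacent parties share a location. This also accounts for the non-monotone flavour of the example, since the $5$-cycle is $3$-chromatic but not bipartite, so every bipartition necessarily merges an adjacent pair (forcing distinguishability), whereas this particular $3$-partition avoids all adjacencies.

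To upgrade ``no adjacent parties come together'' into a rigorous indistinguishability proof, I would argue directly in the three-location configuration $L_1=A_1A_3$, $L_2=A_2A_4$, $L_3=A_5$, because coarsening is unavailable here: every coarser bipartition is already distinguishable, so one cannot deduce indistinguishability by merging locations. Instead, fixing $i$, I would relabel the compound local systems at $L_1$ and $L_2$ (and the single system at $L_3$) so that the product vectors of the fixed-$i$ sub-family collapse onto three symbols per location, identifying that sub-family with a three-location analogue of the Bennett \emph{et al.}\ nonlocal product set $\mathcal{S}$ of Ref.~\cite{Bennett99-1} (equivalently a set of the type in Ref.~\cite{Zhang15}). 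Crucially, each superposition $\ket{0\pm i}$ lives inside a single party and hence inside a single location, so the relabeling respects the locality constraints and carries local measurements to local measurements. Invoking local indistinguishability of the reduced set, together with the fact that an indistinguishable subset forces the full set to be indistinguishable, then yields indistinguishability of the whole set across $A_1A_3|A_2A_4|A_5$.

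The main obstacle I anticipate is exactly this last step. The earlier reductions in the paper all landed in a bipartite, $\mathbb{C}^3\otimes\mathbb{C}^3$-like picture where one may quote Bennett \emph{et al.}\ verbatim, whereas here the target is a genuine three-location nonlocal set and the reduction must be checked to be faithful for every fixed $i$ simultaneously: one must ensure that distinguishing the full family across the $3$-partition would distinguish each fixed-$i$ block, and that the chosen relabeling never smuggles in cross-location information. Once this faithfulness is verified, the two properties together certify that the $m=5$ set of Eq.~(\ref{eq3}) is a fully bi-distinguishable set that is nonetheless LOCC-indistinguishable across a $k$-partition with $2<k<m$, which proves the observation.
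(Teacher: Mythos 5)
Your proposal follows essentially the same route as the paper: the witness is the $m=5$ instance of Eq.~(\ref{eq3}), full bi-distinguishability comes from the fact that every bipartition of the five cyclically ordered parties merges at least one adjacent pair (after which the Protocol~1 technique applies), and the offending partition is the tripartition $A_1A_3|A_2A_4|A_5$. The step you single out as the main obstacle --- a direct proof that the set is LOCC-indistinguishable across this genuinely three-location partition, where no coarsening to an indistinguishable bipartition is available --- is precisely the step the paper does not carry out either: there it is justified only by the sentence ``no adjacent parties come together,'' extrapolated from the $m=4$ reduction to the Bennett \emph{et al.} subset $\mathcal{S}$. Your fixed-$i$ relabeling is sound and yields, per value of $i$, a ten-state three-location product set in $\mathbb{C}^3\otimes\mathbb{C}^3\otimes\mathbb{C}^2$ whose local indistinguishability would have to be established directly (for instance by an orthogonality-preserving-POVM argument of the kind used in Appendix~\ref{appA}); neither you nor the paper supplies that verification, so your attempt is faithful to the paper's argument and, in flagging this gap explicitly, somewhat more candid about what remains to be checked.
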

To distinguish this generalized set, a bipartite entangled state can be sufficient but that state must be shared between any two adjacent parties. In general, if an $m$-partite nonlocal set is locally distinguishable across all $k$-partitions (for $1<k<m$) then the set is {\it fully $k$-distinguishable} set. To distinguish such a set perfectly by LOCC there exists a bipartite resource state which can be shared between any two parties. It will be interesting if it is possible to construct these sets for higher $m$ ($m\geq4$). So far, we have discussed many examples of multipartite nonlocal sets for which a bipartite entangled state can be sufficient to distinguish. On the other hand, there are multipartite nonlocal sets to distinguish which entangled resources across every bipartition are necessary \cite{Halder19, Zhang17, Rout19}. Therefore, the intermediate task is to construct an $m$-partite, $m\geq4$, nonlocal set to distinguish which an $m^\prime$-partite, $2<m^\prime<m$, entangled resource is necessary. Obviously, this question is related to a particular class of $m$-partite nonlocal sets. For which if any $m^\prime$ parties come together then the sets become distinguishable. However, Observation \ref{obs1} indicates that it may possible to construct an $m$-partite set, $m\geq4$, which is indistinguishable across every $k$-partition, $2<k<m$. But the set may show local distinguishability in certain bipartitions. Here, it is important to mention that if a set is locally indistinguishable across every bipartition then the set must be indistinguishable across every $k$-partition. We now construct a set which forms a complete basis in $(\mathbb{C}^3)^{\otimes4}$. The states of the basis are
\begin{widetext}
\begin{equation}\label{eq5}
\begin{array}{c}
\ket{0}\ket{0}\ket{1}\ket{0\pm1},~
\ket{0}\ket{0}\ket{2}\ket{0\pm2},~
\ket{2}\ket{1}\ket{0}\ket{0\pm1},~
\ket{1}\ket{1}\ket{2}\ket{0\pm1},~
\ket{2}\ket{1}\ket{2}\ket{0\pm2},\\

\ket{0}\ket{1}\ket{0\pm1}\ket{0},~
\ket{0}\ket{2}\ket{0\pm2}\ket{0},~
\ket{1}\ket{0}\ket{0\pm1}\ket{2},~
\ket{1}\ket{2}\ket{0\pm1}\ket{1},~
\ket{1}\ket{2}\ket{0\pm2}\ket{2},\\

\ket{1}\ket{0\pm1}\ket{0}\ket{0},~
\ket{2}\ket{0\pm2}\ket{0}\ket{0},~
\ket{0}\ket{0\pm1}\ket{2}\ket{1},~
\ket{2}\ket{0\pm1}\ket{1}\ket{1},~
\ket{2}\ket{0\pm2}\ket{2}\ket{1},\\

\ket{0\pm1}\ket{0}\ket{0}\ket{1},~
\ket{0\pm2}\ket{0}\ket{0}\ket{2},~
\ket{0\pm1}\ket{2}\ket{1}\ket{0},~
\ket{0\pm1}\ket{1}\ket{1}\ket{2},~
\ket{0\pm2}\ket{2}\ket{1}\ket{2},\\[1 ex]

\ket{0}\ket{0}\ket{0}\ket{0},~

\ket{0}\ket{0}\ket{1}\ket{2},~


\ket{0}\ket{1}\ket{0}\ket{1},~
\ket{0}\ket{1}\ket{0}\ket{2},~

\ket{0}\ket{1}\ket{1}\ket{1},~

\ket{0}\ket{1}\ket{2}\ket{0},\\
\ket{0}\ket{1}\ket{2}\ket{2},~

\ket{0}\ket{2}\ket{0}\ket{1},~
\ket{0}\ket{2}\ket{0}\ket{2},~

\ket{0}\ket{2}\ket{1}\ket{1},~

\ket{0}\ket{2}\ket{2}\ket{1},~
\ket{0}\ket{2}\ket{2}\ket{2},\\


\ket{1}\ket{0}\ket{1}\ket{0},~
\ket{1}\ket{0}\ket{1}\ket{1},~

\ket{1}\ket{0}\ket{2}\ket{0},~
\ket{1}\ket{0}\ket{2}\ket{1},~
\ket{1}\ket{0}\ket{2}\ket{2},~

\ket{1}\ket{1}\ket{0}\ket{1},\\
\ket{1}\ket{1}\ket{0}\ket{2},~

\ket{1}\ket{1}\ket{1}\ket{0},~
\ket{1}\ket{1}\ket{1}\ket{1},~

\ket{1}\ket{1}\ket{2}\ket{2},~

\ket{1}\ket{2}\ket{0}\ket{0},~

\ket{1}\ket{2}\ket{1}\ket{2},\\

\ket{1}\ket{2}\ket{2}\ket{0},~
\ket{1}\ket{2}\ket{2}\ket{1},~

\ket{2}\ket{0}\ket{0}\ket{1},~

\ket{2}\ket{0}\ket{1}\ket{0},~
\ket{2}\ket{0}\ket{1}\ket{2},~

\ket{2}\ket{0}\ket{2}\ket{0},\\
\ket{2}\ket{0}\ket{2}\ket{2},~

\ket{2}\ket{1}\ket{0}\ket{2},~

\ket{2}\ket{1}\ket{1}\ket{0},~
\ket{2}\ket{1}\ket{1}\ket{2},~

\ket{2}\ket{1}\ket{2}\ket{1},~

\ket{2}\ket{2}\ket{0}\ket{1},\\
\ket{2}\ket{2}\ket{0}\ket{2},~

\ket{2}\ket{2}\ket{1}\ket{0},~
\ket{2}\ket{2}\ket{1}\ket{1},~

\ket{2}\ket{2}\ket{2}\ket{0},~
\ket{2}\ket{2}\ket{2}\ket{2}.
\end{array}
\end{equation}
\end{widetext}
Notice that in the above basis first few states are twisted and other states are simple product states. We now present the following proposition for the above basis.
\begin{proposition}\label{prop3}
If any two parties come together then also the above basis shows local indistinguishability. Therefore, the basis must be indistinguishable across all tripartitions.
\end{proposition}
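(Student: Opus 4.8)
The plan is to reduce the statement about tripartitions to one about balanced bipartitions, and then, in each such bipartition, to recognize a relabeled copy of a known locally indistinguishable product set.

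First I would use the fact that a tripartition of four parties necessarily has block structure $(2,1,1)$, so each tripartition is fixed by the pair that is brought together, giving six tripartitions in total. The key reduction is monotonicity of distinguishability under coarsening: grouping more parties can only enlarge the set of available joint measurements, so if the basis cannot be distinguished when two parties act jointly, it cannot be distinguished when those two are further separated. Concretely, if the basis is locally indistinguishable in the bipartition $A_iA_j\,|\,A_kA_l$, then it remains indistinguishable in both refining tripartitions $A_iA_j\,|\,A_k\,|\,A_l$ and $A_kA_l\,|\,A_i\,|\,A_j$, since each of these merely separates the parties of one block. Hence the three balanced bipartitions $A_1A_2|A_3A_4$, $A_1A_3|A_2A_4$, and $A_1A_4|A_2A_3$ already cover all six tripartitions, and it suffices to prove local indistinguishability across these three cuts of $(\mathbb{C}^3)^{\otimes4}=\mathbb{C}^9\otimes\mathbb{C}^9$. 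This is precisely the bipartition-to-tripartition implication already invoked for Eq.~(\ref{eq4}).

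Next, since every state of the basis is fully separable, each balanced cut presents the basis as a set of bipartite product states in $\mathbb{C}^9\otimes\mathbb{C}^9$, where a ``$\pm$''-twist carried by a single party becomes a local superposition on the corresponding nine-dimensional side. For each cut I would isolate the twisted states whose twist sits on a party that the cut separates from its partner, and then choose on each nine-dimensional side a relabeling of a suitable two- or three-dimensional block (a local basis change, hence LOCC-free) so that the chosen subset becomes a recognized locally indistinguishable product set, namely a Bennett-type set $\mathcal{S}$ in $\mathbb{C}^3\otimes\mathbb{C}^3$ as in Ref.~\cite{Bennett99-1}, or the set of Ref.~\cite{Zhang15}. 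Because a set containing a locally indistinguishable subset is itself locally indistinguishable, this establishes nonlocality across the cut. The workload can be halved by the cyclic symmetry $A_1\to A_2\to A_3\to A_4\to A_1$, which leaves the collection of twisted states invariant and which sends $A_1A_2|A_3A_4$ to $A_1A_4|A_2A_3$ while fixing $A_1A_3|A_2A_4$; thus only the two representatives $A_1A_2|A_3A_4$ and $A_1A_3|A_2A_4$ need explicit treatment.

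The delicate part is the explicit matching in each cut. After merging, the twists carried by the two parties the cut keeps together collapse into locally accessible superpositions on one side and contribute nothing to nonlocality, so the rotating ``no party can move first'' pattern of $\mathcal{S}$ must be reassembled entirely from the twisted states whose twist lies on a separated party; I would have to verify both that enough of these survive and that their induced left and right local vectors, after relabeling, reproduce the canonical pattern with the correct orthogonality and overlap relations. I expect the genuinely new case, relative to the earlier $A_1A_3|A_2A_4$ analysis of Eq.~(\ref{eq4}), to be the cut $A_1A_2|A_3A_4$ that merges originally adjacent parties: here the main effort is to pin down the correct block on each $\mathbb{C}^9$ side and to confirm that the extracted subset is truly LOCC-indistinguishable rather than accidentally distinguishable, which is where I anticipate the principal obstacle.
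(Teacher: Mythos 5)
Your reduction is sound as far as it goes: a tripartition of four parties has block type $(2,1,1)$, refinement can only shrink the class of LOCC protocols, and the three balanced bipartitions do cover all six tripartitions. Your handling of the non-adjacent cut $A_1A_3|A_2A_4$ is exactly what the paper does (the first sixteen states of Eq.~(\ref{eq5}) relabel to the Bennett-type set $\mathcal{S}$, as in the analysis of Eq.~(\ref{eq4})). The problem is the adjacent cuts, which you yourself flag as the ``principal obstacle'' and then do not resolve. That is not a presentational gap but the entire technical content of the proposition, and the specific route you sketch would fail. You propose to reassemble a nonlocal pattern ``entirely from the twisted states whose twist lies on a separated party.'' For the tripartition $A_1A_2|A_3|A_4$ those are the states of the first two rows of the twisted block, and their $A_1A_2$-components are all computational basis vectors of $\mathbb{C}^9$; the merged pair can therefore measure projectively in that basis, preserving orthogonality and splitting the subset into small groups that $A_3$ and $A_4$ finish off with Walgate-type steps. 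So that subset is locally \emph{distinguishable}, and no Bennett-type set can be extracted from it alone. The states you discard as ``contributing nothing to nonlocality'' --- those whose twist sits on $A_1$ or $A_2$ and hence becomes a local superposition such as $\ket{{\bf3}\pm{\bf4}}\ket{0}\ket{0}$ on the merged side --- are in fact indispensable: they are what prevents the merged pair from measuring projectively.

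The paper's proof takes a different and more robust route for the adjacent cuts: it shows directly that any orthogonality-preserving POVM element $\Pi_i$ on the merged $\mathbb{C}^9$ of $A_1A_2$ must be proportional to the identity. The orthogonality conditions coming from the first two twisted rows (twists on $A_3$ or $A_4$) against suitable basis states kill every off-diagonal entry of $\Pi_i$ (Table~\ref{tab1}), and the conditions coming from the last two twisted rows (twists on $A_1$ or $A_2$, i.e.\ precisely the states your plan discards) force all diagonal entries to be equal (Table~\ref{tab2}). Combined with the separate fact that a lone party cannot begin with a nontrivial orthogonality-preserving measurement, and with the symmetry of the basis over all adjacent pairs, this gives indistinguishability in every tripartition with an adjacent merged pair. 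If you want to salvage a subset-based argument for the adjacent cuts you would at minimum have to include twisted states from both sides of the cut and verify the interlocking orthogonality pattern explicitly; as written, your proposal omits the step that actually carries the proof.
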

The proof is given in the Appendix \ref{appA}. However, it is easy to check that if three parties come together then the basis can be perfectly distinguished by LOCC. Furthermore, because of the construction if a party stands alone then that party cannot eliminate any state from the basis via orthogonality-preserving LOCC \cite{Halder18}. These facts imply that a resource state distributed among three spatially separated parties is necessary to distinguish the basis. We now present an observation. This is due to the resource related discussions given so far. 
\begin{observation}\label{obs2}
Given an $m$-partite nonlocal set, if to distinguish the set perfectly by LOCC an $m^\prime$-partite $(2\leq m^\prime \leq m)$ resource is necessary then the number $m^\prime$ must be minimized over all $k$-partitions, $1<k<m$.
\end{observation}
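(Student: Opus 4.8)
The plan is to prove Observation \ref{obs2} by setting up a precise dictionary between entanglement-assisted LOCC discrimination and ordinary LOCC discrimination across a coarsened partition, extending the teleportation argument already used for Proposition \ref{prop2}. The key fact is that a group of $g$ parties that shares a suitable maximally entangled resource can teleport all of its subsystems into a single location at no extra cost, while conversely any genuinely co-located group can internally prepare any entangled state and run any entanglement-assisted subroutine. Hence, for the purpose of perfect discrimination, ``a group shares enough entanglement'' and ``a group sits in one location'' are interchangeable. I would record this as a lemma: a protocol whose entangled blocks are the groups $G_1,\dots,G_k$ perfectly discriminates the set if and only if the set is perfectly LOCC-distinguishable in the $k$-partition $G_1|G_2|\cdots|G_k$.

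Granting the dictionary, the argument divides into a sufficiency and a necessity half. For sufficiency I would note that if the set becomes distinguishable in some $k$-partition $P$ with group sizes $g_1\ge\cdots\ge g_k$, then $P$ can be realized by entangling the parties inside each group, and the largest connected resource block then spans $g_1$ parties; so an $m'$-partite resource with $m'=g_1$ is sufficient. Sweeping over every $k$-partition with $1<k<m$ in which the set is distinguishable and selecting the one that minimizes the largest group size $g_1$ produces the smallest provably sufficient $m'$. Since $m$ parties admit only finitely many partitions, this minimum is attained.

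For the matching lower bound I would argue contrapositively: if a resource entangles at most $m'-1$ parties in each connected block, then by the dictionary the best configuration it can realize is a partition all of whose groups have size at most $m'-1$; but by the defining property of the minimizing partition $P$ --- namely that every coarsening whose largest block is strictly below $g_1$ fails to discriminate --- no such protocol can succeed. Therefore any successful resource must contain a block of size at least $m'=\min_P g_1$, so the two bounds coincide and $m'$ is exactly this minimum. I would then cross-check the scheme against the two cases already worked out: Proposition \ref{prop2}, where every distinguishing bipartition has $g_1=2$ and hence $m'=2$, and the basis of Eq.~(\ref{eq5}), where Proposition \ref{prop3} blocks every two-party merger while a three-party group restores distinguishability, so that the minimizing partition has $g_1=3$ and $m'=3$.

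The step I expect to be the real obstacle is the necessity half, and specifically the claim that a multipartite resource cannot be spread across prospective groups so as to beat every honest coarsening. The clean resolution is the simulation statement underlying the dictionary: any resource shared among a set $R$ of parties, combined with arbitrary LOCC, is reproducible by the single coarse partition in which all of $R$ occupies one location, because that location can prepare the resource locally and execute the entire protocol internally; thus resource-assisted discrimination is never strictly stronger than co-locating the entangled parties. Making this simulation rigorous for general --- not necessarily maximally entangled --- multipartite resources, and verifying that one may restrict attention to the size of the largest connected block rather than to a clever joint use of several smaller ones, is where care is needed; once it is secured, the minimization over all $k$-partitions asserted in Observation \ref{obs2} is immediate.
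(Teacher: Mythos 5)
Your proposal is correct in outline, but you should know that the paper does not actually prove Observation~\ref{obs2} at all: it is stated as an informal distillation of the preceding material --- Protocol 1 and Proposition~\ref{prop2} (teleportation turns a shared maximally entangled state into an effective co-location), the adjacency analysis of the sets of Eq.~(\ref{eq3})--(\ref{eq4}) (where the bipartite resource must link adjacent parties), and Proposition~\ref{prop3} with Appendix~\ref{appA} (where no two co-located parties can even begin a nontrivial orthogonality-preserving measurement, so a resource spanning three parties is necessary for the basis of Eq.~(\ref{eq5}), realizing threshold $m'=3$). What you do differently is promote the one-directional teleportation argument of Proposition~\ref{prop2} into a two-way dictionary --- a protocol with resource blocks $G_1,\dots,G_k$ succeeds iff LOCC succeeds across the coarsened partition $G_1|G_2|\cdots|G_k$ --- and then read off the threshold $m'$ as the minimum, over successful partitions, of the largest block size, with matching sufficiency (maximally entangled states inside each block) and necessity (a co-located block can internally prepare any resource it would hold, so resource-assisted LOCC is never stronger than the coarsened partition). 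This makes the observation a provable statement for arbitrary sets and arbitrary, not necessarily pure or maximal, resources, where the paper offers only examples; conversely, the paper's route buys concrete constructions showing the threshold genuinely varies. The one step you rightly flag as delicate is the organizing of the necessity half by \emph{connected components} of the union of resource supports: overlapping smaller pieces (Bell pairs on $A_1A_2$ and $A_2A_3$) can emulate a three-party block via entanglement swapping, so ``$m'$-partite resource'' must mean ``largest connected piece spans $m'$ parties.'' Your simulation statement does close this: disjoint components can be co-located separately, and LOCC cannot create entanglement across a cut separating components, so the coarsened-partition bound applies; note this reading is consistent with the paper, since two \emph{disjoint} Bell pairs on $A_1A_3$ and $A_2A_4$ correspond to the partition $A_1A_3|A_2A_4$, which the appendix argument shows fails for the basis of Eq.~(\ref{eq5}), exactly as your dictionary predicts.
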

The above observation clearly indicates a {\it threshold problem} of distinguishability. Suppose a set of $m$-partite orthogonal product states is given. One has to find out the minimum number of parties which have to come together in a single location to distinguish the set perfectly. Note that the multipartite sets which are indistinguishable in a particular bipartition, may have application to distribute bound entanglement. A particular such scenario regarding bound entanglement distribution by sending a separable qubit is presented in the following section.

\section{Bound entanglement distribution}\label{sec5}
Here we discuss the problem of distributing bound entanglement by sending a separable qubit. The protocol we describe here consists of minimum number of particles. A schematic diagram is given in Fig.~\ref{fig2}. 
\begin{figure}[h!]
\includegraphics[width=0.40\textwidth, height=0.205\textheight]{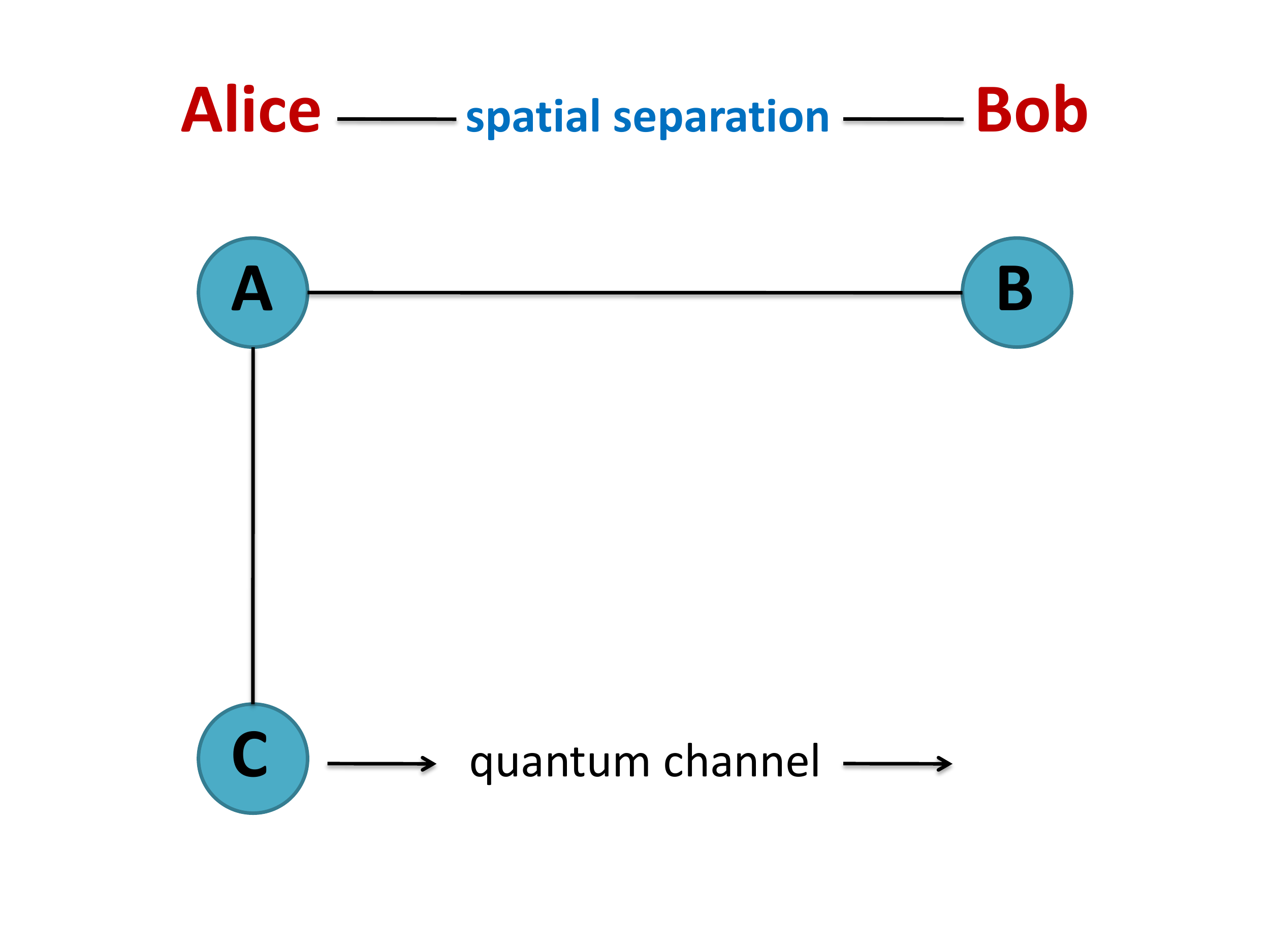}
\caption{(Color online) Initially Alice possesses both the particles A and C. Bob possesses the particle B. Initially there is no entanglement between Alice and Bob. Thereafter, Alice communicates the particle C to Bob via a noiseless quantum channel. Communicated entanglement is also zero. Finally, after the communication of C, there is bound entanglement between Alice and Bob.}\label{fig2}
\end{figure}
We start with a set of product states. To construct the set first consider certain states given in Eq.~(\ref{eq2}). These states are $\{\ket{\psi_1}, \ket{\psi_3}, \ket{\psi_5}, \ket{\psi_7}, \ket{\psi_9}\}$. Along with these states we add another state $\ket{s}=\ket{0+1+2}\ket{0+1}\ket{0+1}$. By Proposition \ref{prop1}, the new set must be locally distinguishable across $B|CA$ and $C|AB$ bipartitions (these symbols contain usual meaning as defined in Sec.~\ref{sec3}). Next, we define a tripartite mixed state in $\mathbb{C}^3\otimes\mathbb{C}^2\otimes\mathbb{C}^2$ and it is given by
\begin{equation}\label{eq6}
\rho_{ABC} = \frac{1}{6}(\mathbb{I}-\sum_{i=1}^6\ket{\phi_i}\bra{\phi_i}).
\end{equation}
Here the states $\{\ket{\phi_i}\}_{i=1}^6$ are the normalized version of the states $\{\ket{\psi_1}, \ket{\psi_3},\ket{\psi_5},\ket{\psi_7},\ket{\psi_9},\ket{s}\}$ respectively. $\mathbb{I}$ is the identity operator acting on the corresponding Hilbert space. Initially, this state is shared between Alice and Bob in a way that Alice possesses the particles A and C, and Bob possesses the particle B. Then Alice sends the particle C to Bob to establish bound entanglement between them. We say the initial entanglement as $E_{in}$ which is equal to $E_{AC|B}$. This is the amount of entanglement contained in the state $\rho_{ABC}$ in $AC|B$ bipartition. Similarly, the amount of entanglement which is communicated is $E_{com}$ = $E_{C|AB}$. This is the amount of entanglement contained in the state $\rho_{ABC}$ in $C|AB$ bipartition. Finally, after the communication of the particle, the entanglement is $E_{fin}$ = $E_{A|BC}$. This is the amount of entanglement contained by the state $\rho_{ABC}$ in $A|BC$ bipartition. Here we show that the state $\rho_{ABC}$ is separable in $B|AC$, $C|AB$ bipartitions but entangled with positive partial transpose in $A|BC$ bipartition. These imply that $E_{fin}>0$ while $E_{in}$ = $E_{com}$ = $0$. In this way, bound entanglement can be distributed by sending a separable qubit. Notice that by Proposition \ref{prop1}, the set $\{\ket{\psi_1}, \ket{\psi_3},\ket{\psi_5},\ket{\psi_7},\ket{\psi_9},\ket{s}\}$ must be extended to a complete orthogonal product basis in $B|AC$, $C|AB$ bipartitions. Therefore, the state $\rho_{ABC}$ must be separable in those two bipartitions. We now proceed to prove that the state $\rho_{ABC}$ is entangled in $A|BC$ bipartition. For this purpose, we first consider the state $\rho_{ABC}$ in $\mathbb{C}^3\otimes\mathbb{C}^4$. Because of the construction, the state $\rho_{ABC}$ must remain positive under partial transpose operation. To detect the entanglement of the state $\rho_{ABC}$ in that bipartition, we apply the technique described in Refs.~\cite{Sengupta13, Halder19-1}. We use the celebrated Choi map \cite{Choi75}, $\Lambda: M_3 \rightarrow M_3$, and a unitary operator $\mathcal{U}$ for the purpose of the detection of entanglement. Action of the Choi map (along with the unitary operator) is described below 
\begin{equation}\label{eq7}
\begin{array}{c}
(\Lambda_{\mathcal{U}}\otimes\mathbb{I})\rho_{A|BC} = (\Lambda\otimes\mathbb{I})(\mathcal{U}\otimes\mathbb{I})\rho_{A|BC}(\mathcal{U}\otimes\mathbb{I})^\dagger,\\[1.0 ex]

\Lambda: ((a_{ij})) \rightarrow 
\frac{1}{2}\left[
\begin{array}{ccc}
a_{11} + a_{22} &        -a_{12}       &        -a_{13}       \\
       -a_{21}       &  a_{22} + a_{33} &        -a_{23}       \\
    -a_{31}       &        -a_{32}       &  a_{33} + a_{11} \\
\end{array}\right],
\end{array}
\end{equation}
where $\mathcal{U}$ is a unitary operator. $\mathbb{I}$ is the identity operator acting on the subsystems $BC$ together. Here, we apply the following unitary operator 
\begin{equation}\label{eq8}
\mathcal{U} =\left[
\begin{array}{ccc}
 \frac{1}{2}  &      \frac{\sqrt{3}}{2}        &            0         \\ 
-\frac{\sqrt{3}}{2}  &      \frac{1}{2}        &            0         \\ 
    0        &          0             &            1         \\
\end{array}\right].
\end{equation}
One can easily check that the minimum eigenvalue of the operator $(\Lambda_{\mathcal{U}}\otimes\mathbb{I})\rho_{A|BC}$ is negative which confirms that the state $\rho_{ABC}$ is entangled in $A|BC$ bipartition. Thus, the protocol of distributing bound entanglement by sending a separable qubit is successfully accomplished. In the next section the conclusion is drawn.

\section{Conclusion}\label{sec6}
We have considered here the local distinguishability of multipartite orthogonal product states across the bipartitions and also across the multipartitions. A classification of tripartite systems has been given. Few multipartite cases have also been studied. These cases never occur for tripartite systems. This study is important to understand the efficient ways of resource sharing among the parties to distinguish a nonlocal set perfectly. We have also shown that if a set is locally indistinguishable in a particular bipartition then it can be used to distribute bound entanglement by sending a separable qubit. An explicit scenario has been shown considering minimum number of particles. However, there are open problems which can be considered for further studies. For example, a generalized scenario is required to construct, for the distribution of higher dimensional bound entanglement following the present method. Furthermore, it will also be interesting to apply the multipartite mixed states of the present kind in the quantum information processing protocols. 

\section*{Acknowledgment}
R. S. acknowledges the financial support from SERB MATRICS MTR/2017/000431.

\appendix
\begin{widetext}
\section{}\label{appA}
Before we present the detailed proof of Proposition \ref{prop3}, it is required to give the definition of a nontrivial orthogonality-preserving measurement (NOPM). In this regard also see Refs.~\cite{Groisman01, Walgate02}. \begin{definition}\label{def1} Suppose a set of orthogonal quantum states is given. To distinguish these states a measurement is performed. If after performing that measurement the post measurement states remain orthogonal then the measurement is an orthogonality-preserving measurement. Furthermore, if all the positive operator valued measure (POVM) elements describing that measurement are proportional to an identity operator then the measurement is a trivial orthogonality-preserving measurement. Otherwise, it is a nontrivial orthogonality-preserving measurement.\end{definition} The notations which we use in this appendix section, contain usual meaning as defined in Sec.~\ref{sec4}. We now proceed to prove that if $A_1$ and $A_2$ come together in a single location then they are not able to begin with a NOPM to distinguish the basis of Eq.~(\ref{eq5}). For this purpose we consider the basis in $A_1A_2|A_3|A_4$ tripartition ($\mathbb{C}^9\otimes\mathbb{C}^3\otimes\mathbb{C}^3$) and relabel it in the following way: $\ket{00}\rightarrow\ket{\bf0}$, $\ket{01}\rightarrow\ket{\bf1}$, $\ket{02}\rightarrow\ket{\bf2}$, $\ket{10}\rightarrow\ket{\bf3}$, $\ket{11}\rightarrow\ket{\bf4}$, $\ket{12}\rightarrow\ket{\bf5}$, $\ket{20}\rightarrow\ket{\bf6}$, $\ket{21}\rightarrow\ket{\bf7}$, $\ket{22}\rightarrow\ket{\bf8}$. The relabeled basis is given as 
\begin{equation}
\begin{array}{c}
\ket{\bf0}\ket{1}\ket{0\pm1},~
\ket{\bf0}\ket{2}\ket{0\pm2},~
\ket{\bf7}\ket{0}\ket{0\pm1},~
\ket{\bf4}\ket{2}\ket{0\pm1},~
\ket{\bf7}\ket{2}\ket{0\pm2},\\

\ket{\bf1}\ket{0\pm1}\ket{0},~
\ket{\bf2}\ket{0\pm2}\ket{0},~
\ket{\bf3}\ket{0\pm1}\ket{2},~
\ket{\bf5}\ket{0\pm1}\ket{1},~
\ket{\bf5}\ket{0\pm2}\ket{2},\\

\ket{{\bf3}\pm{\bf4}}\ket{0}\ket{0},~
\ket{{\bf6}\pm{\bf8}}\ket{0}\ket{0},~
\ket{{\bf0}\pm{\bf1}}\ket{2}\ket{1},~
\ket{{\bf6}\pm{\bf7}}\ket{1}\ket{1},~
\ket{{\bf6}\pm{\bf8}}\ket{2}\ket{1},\\

\ket{{\bf0}\pm{\bf3}}\ket{0}\ket{1},~
\ket{{\bf0}\pm{\bf6}}\ket{0}\ket{2},~
\ket{{\bf2}\pm{\bf5}}\ket{1}\ket{0},~
\ket{{\bf1}\pm{\bf4}}\ket{1}\ket{2},~
\ket{{\bf2}\pm{\bf8}}\ket{1}\ket{2},\\[1 ex]

\ket{\bf0}\ket{0}\ket{0},~

\ket{\bf0}\ket{1}\ket{2},~


\ket{\bf1}\ket{0}\ket{1},~
\ket{\bf1}\ket{0}\ket{2},~

\ket{\bf1}\ket{1}\ket{1},~

\ket{\bf1}\ket{2}\ket{0},\\
\ket{\bf1}\ket{2}\ket{2},~

\ket{\bf2}\ket{0}\ket{1},~
\ket{\bf2}\ket{0}\ket{2},~

\ket{\bf2}\ket{1}\ket{1},~

\ket{\bf2}\ket{2}\ket{1},~
\ket{\bf2}\ket{2}\ket{2},\\


\ket{\bf3}\ket{1}\ket{0},~
\ket{\bf3}\ket{1}\ket{1},~

\ket{\bf3}\ket{2}\ket{0},~
\ket{\bf3}\ket{2}\ket{1},~
\ket{\bf3}\ket{2}\ket{2},~

\ket{\bf4}\ket{0}\ket{1},\\
\ket{\bf4}\ket{0}\ket{2},~

\ket{\bf4}\ket{1}\ket{0},~
\ket{\bf4}\ket{1}\ket{1},~

\ket{\bf4}\ket{2}\ket{2},~

\ket{\bf5}\ket{0}\ket{0},~

\ket{\bf5}\ket{1}\ket{2},\\

\ket{\bf5}\ket{2}\ket{0},~
\ket{\bf5}\ket{2}\ket{1},~

\ket{\bf6}\ket{0}\ket{1},~

\ket{\bf6}\ket{1}\ket{0},~
\ket{\bf6}\ket{1}\ket{2},~

\ket{\bf6}\ket{2}\ket{0},\\
\ket{\bf6}\ket{2}\ket{2},~

\ket{\bf7}\ket{0}\ket{2},~

\ket{\bf7}\ket{1}\ket{0},~
\ket{\bf7}\ket{1}\ket{2},~

\ket{\bf7}\ket{2}\ket{1},~

\ket{\bf8}\ket{0}\ket{1},\\
\ket{\bf8}\ket{0}\ket{2},~

\ket{\bf8}\ket{1}\ket{0},~
\ket{\bf8}\ket{1}\ket{1},~

\ket{\bf8}\ket{2}\ket{0},~
\ket{\bf8}\ket{2}\ket{2}.
\end{array}
\end{equation}
An orthogonality-preserving measurement performed by $A_1$ and $A_2$ together can be described by a set of positive operator valued measure (POVM) elements $\{\Pi_i\}$, where $\sum_i\Pi_i =\mathbb{I}$. The matrix form of any such POVM element can be written in a basis $\{\ket{\bf0}, \ket{\bf1}, \ket{\bf2}, \ket{\bf3}, \ket{\bf4}, \ket{\bf5}, \ket{\bf6}, \ket{\bf7}, \ket{\bf8}\}$. Notice that the dimension of the joint system of $A_1$ and $A_2$ is nine and thus, it is obvious that the POVM elements which describe a measurement performed by $A_1$ and $A_2$ together, must be $9\times9$ matrices. The matrix form of any $\Pi_i$ = $\mbox{M}_i^\dagger\mbox{M}_i$ is given as the following
\begin{equation}
\begin{pmatrix}
a_{00} & a_{01} & a_{02} & a_{03} & a_{04} & a_{05} & a_{06} & a_{07} & a_{08} \\ 
a_{10} & a_{11} & a_{12} & a_{13} & a_{14} & a_{15} & a_{16} & a_{17} & a_{18} \\
a_{20} & a_{21} & a_{22} & a_{23} & a_{24} & a_{25} & a_{26} & a_{27} & a_{28} \\
a_{30} & a_{31} & a_{32} & a_{33} & a_{34} & a_{35} & a_{36} & a_{37} & a_{38} \\
a_{40} & a_{41} & a_{42} & a_{43} & a_{44} & a_{45} & a_{46} & a_{47} & a_{48} \\ 
a_{50} & a_{51} & a_{52} & a_{53} & a_{54} & a_{55} & a_{56} & a_{57} & a_{58} \\
a_{60} & a_{61} & a_{62} & a_{63} & a_{64} & a_{65} & a_{66} & a_{67} & a_{68} \\
a_{70} & a_{71} & a_{72} & a_{73} & a_{74} & a_{75} & a_{76} & a_{77} & a_{78} \\
a_{80} & a_{81} & a_{82} & a_{83} & a_{84} & a_{85} & a_{86} & a_{87} & a_{88} \\
\end{pmatrix}
\end{equation}
Keeping post measurement states orthogonal if $A_1$ and $A_2$ together want to perform a nontrivial measurement then not all the POVM elements $\Pi_i$ should be proportional to an identity operator. Now, we examine the off-diagonal entries of the above matrix. Because the post measurement states are orthogonal to each other so, $\langle{\bf0}|\langle1|\langle0+1|\Pi_i\otimes\mathbb{I}\otimes\mathbb{I}||{\bf1}\rangle|0+1\rangle|0\rangle$ = $\langle{\bf0}|\Pi_i|{\bf1}\rangle\langle1|0+1\rangle\langle0+1|0\rangle = 0$ $\Rightarrow$ $a_{01} = a_{10} = 0$. Repeating the same method, it is possible to show that all off-diagonal entries of the above matrix are zero. This is given in the Table \ref{tab1}.
\begin{table}[ht]
\caption{Off-diagonal entries}\label{tab1}
\centering
\begin{tabular}{|c|c|c||c|c|c|}
\hline
Sl. no. & states & entries & Sl. no. & states & entries \\
\hline\hline
(1) & $|{\bf0}\rangle|1\rangle|0+1\rangle, |{\bf1}\rangle|0+1\rangle|0\rangle$ & $a_{01}$ = $a_{10}$ = $0$ &
    
(2) & $|{\bf0}\rangle|1\rangle|0+1\rangle, |{\bf2}\rangle|1\rangle|1\rangle$ & $a_{02}$ = $a_{20}$ = $0$ \\\hline
    
(3) & $|{\bf0}\rangle|1\rangle|0+1\rangle, |{\bf3}\rangle|1\rangle|0\rangle$ & $a_{03}$ = $a_{30}$ = $0$ &
    
(4) & $|{\bf0}\rangle|1\rangle|0+1\rangle, |{\bf4}\rangle|1\rangle|0\rangle$ & $a_{04}$ = $a_{40}$ = $0$ \\\hline
    
(5) & $|{\bf0}\rangle|1\rangle|0+1\rangle, |{\bf5}\rangle|0+1\rangle|1\rangle$ & $a_{05}$ = $a_{50}$ = $0$ &
    
(6) & $|{\bf0}\rangle|1\rangle|0+1\rangle, |{\bf6}\rangle|1\rangle|0\rangle$ & $a_{06}$ = $a_{60}$ = $0$ \\\hline
    
(7) & $|{\bf0}\rangle|1\rangle|0+1\rangle, |{\bf7}\rangle|1\rangle|0\rangle$ & $a_{07}$ = $a_{70}$ = $0$ &
    
(8) & $|{\bf0}\rangle|1\rangle|0+1\rangle, |{\bf8}\rangle|1\rangle|0\rangle$ & $a_{08}$ = $a_{80}$ = $0$ \\\hline
    
(9) & $|{\bf1}\rangle|0+1\rangle|0\rangle, |{\bf2}\rangle|0+2\rangle|0\rangle$ & $a_{12}$ = $a_{21}$ = $0$ &
     
(10) & $|{\bf1}\rangle|0+1\rangle|0\rangle, |{\bf3}\rangle|1\rangle|0\rangle$ & $a_{13}$ = $a_{31}$ = $0$ \\\hline
     
(11) & $|{\bf1}\rangle|0+1\rangle|0\rangle, |{\bf4}\rangle|1\rangle|0\rangle$ & $a_{14}$ = $a_{41}$ = $0$ &
     
(12) & $|{\bf1}\rangle|0+1\rangle|0\rangle, |{\bf5}\rangle|0\rangle|0\rangle$ & $a_{15}$ = $a_{51}$ = $0$ \\\hline
     
(13) & $|{\bf1}\rangle|0+1\rangle|0\rangle, |{\bf6}\rangle|1\rangle|0\rangle$ & $a_{16}$ = $a_{61}$ = $0$ &
     
(14) & $|{\bf1}\rangle|0+1\rangle|0\rangle, |{\bf7}\rangle|1\rangle|0\rangle$ & $a_{17}$ = $a_{71}$ = $0$ \\\hline
     
(15) & $|{\bf1}\rangle|0+1\rangle|0\rangle, |{\bf8}\rangle|1\rangle|0\rangle$ & $a_{18}$ = $a_{81}$ = $0$ &
     
(16) & $|{\bf2}\rangle|0+2\rangle|0\rangle, |{\bf3}\rangle|2\rangle|0\rangle$ & $a_{23}$ = $a_{32}$ = $0$ \\\hline
     
(17) & $|{\bf2}\rangle|0+2\rangle|0\rangle, |{\bf4}\rangle|2\rangle|0+1\rangle$ & $a_{24}$ = $a_{42}$ = $0$ &
     
(18) & $|{\bf2}\rangle|0+2\rangle|0\rangle, |{\bf5}\rangle|2\rangle|0\rangle$ & $a_{25}$ = $a_{52}$ = $0$ \\\hline
     
(19) & $|{\bf2}\rangle|0+2\rangle|0\rangle, |{\bf6}\rangle|2\rangle|0\rangle$ & $a_{26}$ = $a_{62}$ = $0$ &
     
(20) & $|{\bf2}\rangle|0+2\rangle|0\rangle, |{\bf7}\rangle|2\rangle|0+2\rangle$ & $a_{27}$ = $a_{72}$ = $0$ \\\hline
     
(21) & $|{\bf2}\rangle|0+2\rangle|0\rangle, |{\bf8}\rangle|2\rangle|0\rangle$ & $a_{28}$ = $a_{82}$ = $0$ &
     
(22) & $|{\bf3}\rangle|0+1\rangle|2\rangle, |{\bf4}\rangle|0\rangle|2\rangle$ & $a_{34}$ = $a_{43}$ = $0$ \\\hline
     
(23) & $|{\bf3}\rangle|0+1\rangle|2\rangle, |{\bf5}\rangle|1\rangle|2\rangle$ & $a_{35}$ = $a_{53}$ = $0$ &
     
(24) & $|{\bf3}\rangle|0+1\rangle|2\rangle, |{\bf6}\rangle|1\rangle|2\rangle$ & $a_{36}$ = $a_{63}$ = $0$ \\\hline
     
(25) & $|{\bf3}\rangle|0+1\rangle|2\rangle, |{\bf7}\rangle|1\rangle|2\rangle$ & $a_{37}$ = $a_{73}$ = $0$ &
     
(26) & $|{\bf3}\rangle|0+1\rangle|2\rangle, |{\bf8}\rangle|0\rangle|2\rangle$ & $a_{38}$ = $a_{83}$ = $0$ \\\hline
     
(27) & $|{\bf4}\rangle|2\rangle|0+1\rangle, |{\bf5}\rangle|2\rangle|1\rangle$ & $a_{45}$ = $a_{54}$ = $0$ &
     
(28) & $|{\bf4}\rangle|2\rangle|0+1\rangle, |{\bf6}\rangle|2\rangle|0\rangle$ & $a_{46}$ = $a_{64}$ = $0$ \\\hline
     
(29) & $|{\bf4}\rangle|2\rangle|0+1\rangle, |{\bf7}\rangle|2\rangle|1\rangle$ & $a_{47}$ = $a_{74}$ = $0$ &
     
(30) & $|{\bf4}\rangle|2\rangle|0+1\rangle, |{\bf8}\rangle|2\rangle|0\rangle$ & $a_{48}$ = $a_{84}$ = $0$ \\\hline
     
(31) & $|{\bf5}\rangle|0+1\rangle|1\rangle, |{\bf6}\rangle|0\rangle|1\rangle$ & $a_{56}$ = $a_{65}$ = $0$ &
     
(32) & $|{\bf5}\rangle|0+1\rangle|1\rangle, |{\bf7}\rangle|0\rangle|0+1\rangle$ & $a_{57}$ = $a_{75}$ = $0$ \\\hline
     
(33) & $|{\bf5}\rangle|0+1\rangle|1\rangle, |{\bf8}\rangle|0\rangle|1\rangle$ & $a_{58}$ = $a_{85}$ = $0$ &
     
(34) & $|{\bf6}\rangle|1\rangle|0\rangle, |{\bf7}\rangle|1\rangle|0\rangle$ & $a_{67}$ = $a_{76}$ = $0$ \\\hline
     
(35) & $|{\bf6}\rangle|1\rangle|0\rangle, |{\bf8}\rangle|1\rangle|0\rangle$ & $a_{68}$ = $a_{86}$ = $0$ &
     
(36) & $|{\bf7}\rangle|1\rangle|0\rangle, |{\bf8}\rangle|1\rangle|0\rangle$ & $a_{78}$ = $a_{87}$ = $0$ \\\hline
\end{tabular}
\end{table}

Next, we examine the diagonal entries. Considering the inner product $\langle{\bf0}+{\bf1}|\langle2|\langle1|\Pi_i\otimes\mathbb{I}\otimes\mathbb{I}|{\bf0}-{\bf1}\rangle|2\rangle|1\rangle$ = $0$, we get $a_{00}$ = $a_{11}$. Repeating the same approach we get that all diagonal entries of the above matrix are equal. This is given in the Table \ref{tab2}. 
\begin{table}[ht]
\caption{Diagonal entries}\label{tab2}
\centering
\begin{tabular}{|c|c||c|c|}
\hline
states & entries & states & entries \\
\hline\hline
$\ket{{\bf3}\pm{\bf4}}\ket{0}\ket{0}$ & $a_{33}$ = $a_{44}$ &

$\ket{{\bf6}\pm{\bf8}}\ket{2}\ket{1}$ & $a_{66}$ = $a_{88}$ \\\hline

$\ket{{\bf0}\pm{\bf1}}\ket{2}\ket{1}$ & $a_{00}$ = $a_{11}$ &

$\ket{{\bf6}\pm{\bf7}}\ket{1}\ket{1}$ & $a_{66}$ = $a_{77}$ \\\hline

$\ket{{\bf0}\pm{\bf3}}\ket{0}\ket{1}$ & $a_{00}$ = $a_{33}$ &

$\ket{{\bf0}\pm{\bf6}}\ket{0}\ket{2}$ & $a_{00}$ = $a_{66}$ \\\hline

$\ket{{\bf2}\pm{\bf5}}\ket{1}\ket{0}$ & $a_{22}$ = $a_{55}$ &

$\ket{{\bf2}\pm{\bf8}}\ket{1}\ket{2}$ & $a_{22}$ = $a_{88}$ \\
\hline
\end{tabular}
\end{table}

Thus, it is proved that the POVM elements $\{\Pi_i\}$ are proportional to an identity operator. As a result of which $A_1$ and $A_2$ together cannot begin with a NOPM. This implies if $A_1$ and $A_2$ come together in a single location then it does not contribute to the local distinguishability of the considered basis (see the arguments given in Refs.~\cite{Groisman01, Walgate02}). Now, this holds true for all pairs of adjacent parties as the basis captures a particular type of symmetry. For non adjacent pairs of parties like $A_1A_3$ and $A_2A_4$, they cannot help for perfect discrimination by coming together and it is because of the first 16 states (see Eq.~(\ref{eq5}): first two columns in first row) of the basis. For the proof see Sec.~\ref{sec4}. Thus, if any two parties come together then also the basis shows local indistinguishability which implies that the basis is locally indistinguishable across all tripartitions. Here the proof is completed.
\end{widetext}

\bibliography{ref}
\end{document}